%
\documentclass[runningheads]{llncs}
\usepackage{graphicx}
\usepackage{tikz} 
  \usepackage{tikz-qtree}
\usetikzlibrary{calc,snakes,shapes,arrows.meta}
\usepackage{amssymb,amsmath}  
\usepackage{subcaption}
\usepackage{hyperref}
\usepackage[capitalise]{cleveref}
\usepackage{verbatim}
\usepackage{enumerate}
\usepackage{todonotes}
\usepackage{multirow}
\usepackage[ruled,noend]{algorithm2e}
\DontPrintSemicolon

\usepackage[noend]{algpseudocode}
\newcommand{\cO}{\mathcal{O}}
\newcommand{\Oh}{\mathcal{O}}

\def\dd{\mathinner{.\,.}}

\newcommand{\Covers}{\mathsf{Covers}}
\newcommand{\SCovers}{\mathsf{SCovers}}
\newcommand{\LCovers}{\mathsf{LCovers}}

\def\pillar{{\tt PILLAR}\xspace}
\newcommand{\LCP}{\mathsf{LCP}}
\newcommand{\LCPR}{\mathsf{LCP}_{\mathsf{R}}}
\newcommand{\IPM}{\mathsf{IPM}}
\newcommand{\Cov}{\mathsf{Cov}}
\newcommand{\Fib}{\mathsf{Fib}}

%

\begin{document}
\title{Computing String Covers in Sublinear Time}
%
%

\author{
Jakub Radoszewski\inst{1}\fnmsep\thanks{Supported by the Polish National Science Center, grant no. 2022/46/E/ST6/00463.}\orcidID{0000-0002-0067-6401} \and
Wiktor Zuba\inst{2}\fnmsep\thanks{Supported by the European Union's Horizon 2020 research and innovation programme under the Marie Skłodowska-Curie Grant Agreement No 101034253.}\orcidID{0000-0002-1988-3507}
}
\authorrunning{J.~Radoszewski and W.~Zuba}
%
\institute{
University of Warsaw, Warsaw, Poland
\email{jrad@mimuw.edu.pl}
\and
CWI, Amsterdam, The Netherlands
\email{wiktor.zuba@cwi.nl}
}
\maketitle              
\begin{abstract}
Let $T$ be a string of length $n$ over an integer alphabet of size $\sigma$.
In the word RAM model, $T$ can be represented in $\Oh(n /\log_\sigma n)$ space.
We show that a representation of all covers of $T$ can be computed in the optimal $\Oh(n/\log_\sigma n)$ time; in particular, the shortest cover can be computed within this time.
We also design an $\Oh(n(\log\sigma + \log \log n)/\log n)$-sized data structure that computes in $\Oh(1)$ time any element of the so-called (shortest) cover array of $T$, that is, the length of the shortest cover of any given prefix of $T$.
As a by-product, we describe the structure of cover arrays of Fibonacci strings.
On the negative side, we show that the shortest cover of a length-$n$ string cannot be computed using $o(n/\log n)$ operations in the \pillar model of Charalampopoulos, Kociumaka, and Wellnitz (FOCS 2020). 

\keywords{Cover  \and Quasiperiod \and Cover array \and Packed string matching \and \pillar model}
\end{abstract}

\section{Introduction}
A string $C$ is called a \emph{cover} (or a \emph{quasiperiod}) of a string $T$ if each position in $T$ lies within an occurrence of $C$ in $T$. A cover is called \emph{proper} if it is shorter than the covered string. A string that does not have proper covers is called \emph{superprimitive} (see \cite{DBLP:journals/ipl/Breslauer92}). The shortest cover of a string of length $n$ can be computed in $\Oh(n)$ time~\cite{DBLP:journals/ipl/ApostolicoFI91}. Furthermore, all covers of a length-$n$ string can be computed in $\Oh(n)$ time~\cite{DBLP:journals/ipl/MooreS95}. A cover of a string is a prefix of the string, so a string of length $n$ indeed has at most $n$ covers.

The lengths of all covers of a string of length $n$ can be represented using $\Oh(\log n)$ disjoint arithmetic progressions
~\cite{DBLP:conf/spire/CrochemoreIRRSW20}. For a string $T$, we denote such a representation as $\Covers(T)$. A similar representation is well known to exist for the set of all borders of a string (see, e.g., \cite{DBLP:conf/cpm/CrochemoreIKKRRTW12}).

We consider the standard word RAM model with machine word composed of $\omega \ge \log_2 n$ bits.
In this model, a string of length $n$ over an alphabet of size $\sigma$ can be represented using $\Oh(n/\log_\sigma n)$ machine words, that is, $\Oh(n \log \sigma)$ bits, in a so-called packed representation; see~\cite{DBLP:journals/tcs/Ben-KikiBBGGW14}.
In \cref{sec:sublinear} we show the following result that improves upon~\cite{DBLP:journals/ipl/ApostolicoFI91,DBLP:journals/ipl/Breslauer92,DBLP:journals/ipl/MooreS95} in the case that the string is over a small alphabet.

\begin{theorem}\label{thm:main1}
A representation $\Covers(T)$ of all the covers of a string $T$ of length $n$ over an alphabet of size $\sigma$ given in a packed form, consisting of $\Oh(\log n)$ arithmetic progressions, can be computed in $\Oh(n /\log_\sigma n)$ time.
\end{theorem}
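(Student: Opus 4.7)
The plan is to exploit that every cover of $T$ is a border of $T$, and the border set admits a decomposition into $\mathcal{O}(\log n)$ arithmetic progressions (APs) grouped by common periods — a standard consequence of Fine--Wilf together with the KMP failure function hierarchy. Given this decomposition, $\Covers(T)$ will be obtained by filtering each border AP down to those lengths satisfying the cover (maximum-gap) condition.

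First, I would invest $\mathcal{O}(n/\log_\sigma n)$ time in a packed preprocessing of $T$ that supports, in $\mathcal{O}(1)$ per query, (i) longest common extension queries between arbitrary suffixes of $T$ and (ii) internal pattern matching ($\IPM$) queries that return, for any factor $T[i\dd j]$, the set of its occurrences in $T$ as $\mathcal{O}(1)$ APs of starting positions. Such constructions are available in the packed/word-RAM setting (e.g., via packed suffix-tree-like structures). Using $\mathcal{O}(\log n)$ LCE queries on top of this structure, I would then extract the $\mathcal{O}(\log n)$ APs of border lengths by walking up the period hierarchy.

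Third, for each border AP $\{a, a+d, a+2d, \ldots, a+kd\}$, a single $\IPM$ query yields the occurrences of the longest element $T[1\dd a+kd]$ as $\mathcal{O}(1)$ APs of positions; the occurrences of the shorter prefixes $T[1\dd a+jd]$ are then obtained from these by extending along the run induced by the common period $d$, and are still described by $\mathcal{O}(1)$ APs. A length $\ell$ in the border AP is a cover iff the maximum gap between consecutive occurrences of $T[1\dd\ell]$ is at most $\ell$ (plus the usual boundary conditions at the ends of $T$). The key combinatorial claim I need to prove is that within each border AP the ``being a cover'' property is monotone in $\ell$, so that the covers inside one border AP form $\mathcal{O}(1)$ sub-APs, yielding $\mathcal{O}(\log n)$ APs in $\Covers(T)$ overall.

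The main obstacle will be this third step: establishing the monotonicity cleanly, and evaluating the maximum inter-occurrence gap in $\mathcal{O}(1)$ per AP directly from its AP-representation of occurrences. This needs careful case analysis at the endpoints of the runs (where an AP of occurrences meets another, or terminates inside $T$), but once it is set up, each border AP is processed in $\mathcal{O}(1)$ time after the preprocessing. Summed over the $\mathcal{O}(\log n)$ border APs the filtering costs $\mathcal{O}(\log n)$ time, so the total running time is $\mathcal{O}(n/\log_\sigma n) + \mathcal{O}(\log n) = \mathcal{O}(n/\log_\sigma n)$, as required.
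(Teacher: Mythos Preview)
Your outline has a genuine gap in the third step, stemming from a mis-specification of what an $\IPM$ query does. An $\IPM$ query takes $X,Y$ with $|Y|\le 2|X|$ and returns the occurrences of $X$ in that one window $Y$ (as a single AP). It does \emph{not} return all occurrences of a factor in the whole of $T$ as $\mathcal{O}(1)$ APs; in fact the occurrences of a length-$\ell$ factor in $T$ can form $\Theta(n/\ell)$ pairwise unrelated APs, so no $\mathcal{O}(1)$ representation exists in general. Consequently, locating all occurrences of the representative border of a progression with lengths in $[d,2d]$ costs $\Theta(n/d)$ $\IPM$ queries and yields $\Theta(n/d)$ APs of positions, and the max-gap / cut-off computation costs $\Theta(n/d)$, not $\mathcal{O}(1)$.

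Summing $\Theta(n/d)$ over the $\mathcal{O}(\log n)$ border groups gives a geometric series that is $\Theta(n/d_{\min})$, where $d_{\min}$ is the shortest border length you process. Since $T$ can have borders of constant length (e.g.\ $\mathtt{aba}$ in a Fibonacci string), your total is $\Theta(n)$, not $\mathcal{O}(n/\log_\sigma n)$. The paper's proof resolves exactly this by splitting off the ``short'' borders of length at most $c=\Theta(\log_\sigma n)$ and handling them with a completely different, non-$\IPM$ technique: it collects the $\mathcal{O}(\sqrt{n})$ distinct length-$3c$ windows of $T$ (de-duplicated via direct addressing, since each window fits in a machine word) and checks each short candidate against this set in $o(n/\log_\sigma n)$ total time. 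Only borders of length $>c$ are processed with $\IPM$, and then the geometric sum $\sum_{d\ge c} n/d = \mathcal{O}(n/c)=\mathcal{O}(n/\log_\sigma n)$ gives the claimed bound. Your monotonicity/cut-off idea within each long border AP is correct and matches the paper, but without the separate short-cover mechanism the argument does not achieve sublinear time.
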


The representation $\Covers(T)$ can be transformed in $\Oh(n/\log n)$ time to a Boolean array of size $n$, represented in a packed form, that stores for every $\ell \in [1 \dd n]$ a Boolean value that determines if a length-$\ell$ prefix of $T$ is a cover of $T$.

\begin{figure}
\centering
\begin{tikzpicture}[scale=0.5]
\draw (0,0) node[left] {$\ell$:};
\draw (0,1.8) node[left] {$T[\ell-1]$:};
\draw (0,0.8) node[left] {$\Cov_T[\ell]$:};
\foreach \x/\y in {0/0,3/0,5/1,8/0,11/0,13/1,16/0,18/1}{
\filldraw[white!90!black,xshift=\x cm,yshift=\y cm] (0.6,2.4) rectangle (3.4,3.3);
\draw[xshift=\x cm,yshift=\y cm] (0.6,2.4) rectangle (3.4,3.3);
\foreach[count=\i] \c in {a,b,a}{
\draw[xshift=\x cm,yshift=\y cm] (\i,2.4) node[above] {\texttt{\c}};
}
}
\foreach \x/\y in {0/0,8/0,13/1}{
\filldraw[white!95!black,xshift=\x cm,yshift=\y cm] (0.6,4.9) rectangle (8.4,5.8);
\draw[xshift=\x cm,yshift=\y cm] (0.6,4.9) rectangle (8.4,5.8);
\foreach[count=\i] \c in {a,b,a,a,b,a,b,a}{
\draw[xshift=\x cm,yshift=\y cm] (\i,4.9) node[above] {\texttt{\c}};
}
}
\foreach[count=\i] \c in {\textcolor{white!60!black}{1},\textcolor{white!60!black}{2},\textcolor{white!60!black}{3},\textcolor{white!60!black}{4},\textcolor{white!60!black}{5},3,\textcolor{white!60!black}{7},3,\textcolor{white!60!black}{9},5,3,\textcolor{white!60!black}{12},5,3,\textcolor{white!60!black}{15},3,9,5,3,\textcolor{white!60!black}{20},3}{\draw (\i,0.5) node[above] {\c};}
\foreach[count=\i] \c in {a,b,a,a,b,a,b,a,a,b,a,a,b,a,b,a,a,b,a,b,a}{\draw (\i,1.5) node[above] {\texttt{\c}};}
\foreach \i in {1,2,...,21}{\draw (\i,0) node {\scriptsize \i};}
\end{tikzpicture}
\caption{Both proper covers (\texttt{aba}, \texttt{abaababa}) and the cover array of a Fibonacci string $T$. Values $\Cov_T[\ell]=\ell$ corresponding to superprimitive prefixes $T[0 \dd \ell)$ are shown in gray.}
\label{fig:cov_fib}
\end{figure}
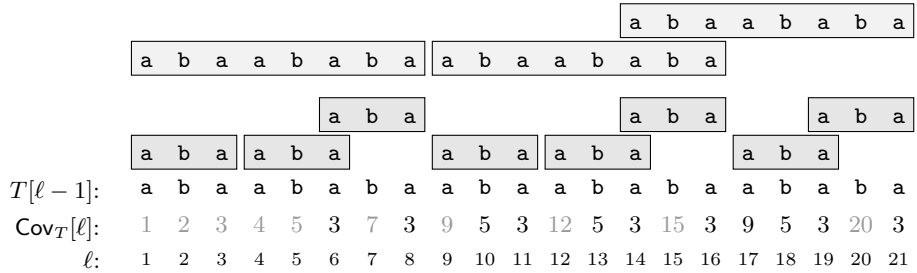

The (shortest) cover array of a string $T$, $\Cov_T[1 \dd |T|]$, stores for every position $\ell$ of $T$ the length of the shortest cover of a length-$\ell$ prefix of $T$ as $\Cov_T[\ell]$; see \cref{fig:cov_fib}. The cover array is the output of Breslauer's on-line algorithm computing shortest covers~\cite{DBLP:journals/ipl/Breslauer92}; see also~\cite{DBLP:conf/cpm/CrochemoreIPT10}. We give a sublinear-sized representation of this array.

\begin{theorem}\label{thm:online}
Let $T$ be a text of length $n$ over an integer alphabet of size $\sigma$. There exists a data structure using space $\Oh(n(\log\sigma + \log \log n)/\log n)$ that, given $\ell \in [1 \dd n]$, returns $\Cov_T[\ell]$ in $\Oh(1)$ time.
\end{theorem}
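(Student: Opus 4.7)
The plan is to combine Theorem~\ref{thm:main1} with a block decomposition of $[1 \dd n]$ and a four-Russians style lookup, aiming to pay $O(\log\sigma)$ bits per position for $T$ itself and $O(\log\log n)$ bits per position for the auxiliary structure.

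\textbf{Sampling.} I would partition $[1 \dd n]$ into blocks of length $b = \Theta(\log n)$, giving $n/b$ blocks, and store the anchor values $\Cov_T[jb]$ explicitly for every block boundary $j$. Each anchor fits in one machine word, so these samples contribute $O(n/\log n)$ words. Alongside this I would keep the packed representation of $T$ and standard packed-string tools (LCE queries, the $\Covers$ representation produced by Theorem~\ref{thm:main1}, and a predecessor structure over the endpoints of the arithmetic progressions in $\Covers$), all within $O(n\log\sigma/\log n)$ words.

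\textbf{Per-block descriptors.} Between two consecutive samples the cover array can change many times (Fibonacci-like examples show $\Theta(b)$ changes per block), so a descriptor storing each $\Cov_T[\ell]$ in $\log n$ bits would overshoot the budget. Instead, I would encode each position $\ell$ inside a block by the index of the arithmetic progression of $\Covers(T[0\dd\ell))$ that contains $\Cov_T[\ell]$, which costs only $O(\log\log n)$ bits because there are $O(\log n)$ such progressions. The descriptor of a block is then a sequence of $b = O(\log n)$ such indices together with $O(\log\log n)$ bits of auxiliary transition information (telling us, given the anchor, how $\Covers(T[0\dd\ell))$ itself evolves as $\ell$ advances). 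This totals $O(\log n \cdot \log\log n)$ bits per block, i.e., $O(n\log\log n/\log n)$ words overall, matching the second term of the target space bound.

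\textbf{Queries.} For a query $\ell \in (jb, (j+1)b]$ I would read the anchor $\Cov_T[jb]$, the descriptor of block $j$ (which fits in $O(\log\log n)$ words), and an $O(1)$-word packed window of $T$ around position $\ell$, then consult a universal precomputed table. With $b = O(\log n / \log\sigma)$ (or by splitting blocks once more into packed micro-blocks), the table is indexed by a single machine word so it has only $n^{O(1)}$ entries; populated in preprocessing by a direct application of Breslauer's algorithm to each entry, it allows us to recover $\Cov_T[\ell]$ from the descriptor, the anchor and $T$ in $O(1)$ time.

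\textbf{Main obstacle.} The crux is showing that the per-block descriptors really carry enough information to reconstruct $\Cov_T[\ell]$ in constant time. Concretely, one needs that (i) the shortest cover of $T[0\dd\ell)$ is determined by the anchor's shortest cover, the arithmetic-progression index inside $\Covers(T[0\dd\ell))$, and a constant amount of additional state, and (ii) the evolution of $\Covers(T[0\dd\ell))$ as $\ell$ moves through a block of length $\Theta(\log n)$ is describable in $O(\log\log n)$ bits---which I expect to follow from a Fine-and-Wilf style periodicity argument, since entering a genuinely new progression forces the relevant cover length to grow roughly geometrically and hence can happen only $O(\log\log n)$ times per block. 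Once this structural lemma is established, the rest of the construction is routine word-RAM engineering.
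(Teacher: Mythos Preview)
Your proposal diverges from the paper at the decoding step, and the divergence is a genuine gap rather than an alternative route.

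The paper stores, like you, an $O(\log\log n)$-bit index at every position, but the index points into a single \emph{global} list $p_1,\ldots,p_t$ of $t=O(\log n)$ aperiodic prefix lengths of $T$, one per primitively rooted square prefix (the Three Squares Lemma gives $t=O(\log n)$). The key structural fact is that if $C=T[0\dd c)$ is the shortest proper cover of $T[0\dd\ell)$ and its second occurrence starts at position $j$, then $T[0\dd 2j)$ is a primitively rooted square, and the last occurrence of the associated aperiodic prefix $P_i$ in $T[0\dd\ell)$ begins exactly at $\ell-c$. So to answer a query one reads the stored index $i$ and issues a single $\IPM$ query for $P_i$ in the length-$2p_i$ window $T[\ell-2p_i+1\dd\ell)$; the rightmost hit at position $k$ gives $\Cov_T[\ell]=\ell-k$. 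There is no block decomposition, no anchoring, and no lookup table.

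Your scheme tries to replace this $\IPM$-based decoding by a four-Russians table keyed on the anchor $\Cov_T[jb]$, the block descriptor, and an $O(1)$-word window of $T$ near $\ell$. This cannot work as described: $\Cov_T[\ell]$ depends on where the candidate cover occurs throughout all of $T[0\dd\ell)$, not on any bounded window, so a universal table indexed by a local window has no way to select the right element of whatever progression your index names. The anchor does not rescue this, since the shortest covers of $T[0\dd jb)$ and $T[0\dd\ell)$ can be attached to different primitively rooted squares and need not be arithmetically related. Moreover, even if the table were indexed by the full descriptor, that descriptor occupies $\Theta(\log n\cdot\log\log n)$ bits, so the table would have $n^{\Theta(\log\log n)}$ entries; splitting into micro-blocks does not remove the global dependence.

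A smaller but real issue: your per-position index is ill-defined. You write ``the index of the arithmetic progression of $\Covers(T[0\dd\ell))$ that contains $\Cov_T[\ell]$'', but $\Cov_T[\ell]$ is the \emph{minimum} of $\Covers(T[0\dd\ell))$ and hence always sits in the first progression of that representation, so the index carries no information. What you actually need (and what the paper supplies) is a fixed, $\ell$-independent family of $O(\log n)$ objects to index into, together with a constant-time mechanism that turns the index and $\ell$ into a length; the paper's choice of aperiodic prefixes plus $\IPM$ is exactly such a mechanism.
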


Our results extend the list of basic stringology problems for which representing the input in a packed form allows to obtain an $o(n)$-time solution; see~\cite{DBLP:conf/esa/BannaiE23,DBLP:journals/tcs/Ben-KikiBBGGW14,DBLP:conf/esa/Charalampopoulos21,DBLP:conf/cpm/Charalampopoulos22,DBLP:conf/stoc/KempaK19,DBLP:conf/cpm/MunroNN20}.

As a by-product, we give a characterization of the cover arrays of Fibonacci strings (\cref{thm:fib}).

We also consider covers in the \pillar model. This model was introduced in~\cite{DBLP:conf/focs/Charalampopoulos20} with the aim of unifying approximate pattern matching algorithms across different settings.
In this model, we consider a collection $\mathcal{X}$ of strings and assume that certain primitive \pillar operations
can be performed efficiently. The set of primitive operations consists of computing the length of the longest common prefix ($\LCP$) or suffix ($\LCPR$) of substrings of strings in $\mathcal{X}$, so-called internal pattern matching ($\IPM$) queries that ask for the set of occurrences of one substring in another substring that is at most twice as long, represented as an arithmetic progression, as well as simple operations allowing to access letters of strings. (For a formal definition, see \cref{sec:pillar}.)

The strength of the \pillar model lies in the fact that efficient implementations of its primitives are known in many different settings:
\begin{itemize}
    \item In the \emph{standard setting}, in which all strings in the collection $\mathcal{X}$ are substrings of a given string of length $n$ over an integer alphabet of size $\sigma$, each \pillar operation on its substrings can be performed in $\Oh(1)$ time after $\Oh(n)$ preprocessing~\cite{DBLP:conf/latin/BenderF00,DBLP:conf/focs/Farach97,DBLP:conf/soda/KociumakaRRW15} and even after just $\Oh(n/\log_\sigma n)$ preprocessing~\cite{DBLP:conf/stoc/KempaK19,DBLP:journals/corr/KociumakaRRW13}.
    \item In the \emph{dynamic setting}, the collection $\mathcal{X}$ can updated dynamically under edit operations (insertions, deletions, substitutions) with each edit operation and each \pillar operation performed in $\Oh(\log^{\Oh(1)}N)$ time, where $N$ is the total size of $\mathcal{X}$~\cite{DBLP:conf/focs/Charalampopoulos20,DBLP:conf/stoc/KempaK22}.
    \item In the \emph{fully compressed setting}, given a collection $\mathcal{X}$ of straight-line programs (SLPs) of total size $n$ generating strings of total length $N$, each \pillar operation can be performed in $\Oh(\log^2 N \log \log N)$ time after $\Oh(n \log N)$-time preprocessing~\cite{DBLP:conf/focs/Charalampopoulos20}.
    \item An efficient implementation of the \pillar operations is also known in the \emph{quantum setting}~\cite{DBLP:journals/jda/HariharanV03,DBLP:conf/soda/JinN23}.
\end{itemize}

Thus if a problem can be solved fast in the \pillar model, it immediately implies its efficient solutions in all the above mentioned settings. For example, the fact that an $\Oh(\log n)$-sized representation of all the periods (equivalently, borders) of a length-$n$ string can be computed in $\Oh(\log n)$ time in the \pillar model~\cite[Period Query]{DBLP:journals/corr/KociumakaRRW13,DBLP:conf/soda/KociumakaRRW15} implies that a representation of the periods of a dynamic string can be updated in $\Oh(\log^{\Oh(1)} N)$ time per operation and that a representation of all periods of a fully compressed string of length $N$ generated by an SLP of size $n$ can be computed in $\Oh(n \log^{\Oh(1)} N)$ time. In the case of covers, some efficient algorithms were designed for each of the above mentioned non-standard settings separately:
\begin{itemize}
    \item In the \emph{internal setting}, which is a special case of the standard setting, after $\Oh(n \log n)$ preprocessing of a length-$n$ string $T$, one can compute a representation of all covers of any substring of $T$ in $\Oh(\log n \log \log n)$ time and the shortest cover of any substring in $\Oh(\log n)$ time~\cite{DBLP:conf/spire/CrochemoreIRRSW20,DBLP:conf/cpm/BelazzouguiKPR21}.
    \item In a restricted dynamic setting in which each edit operation is reverted immediately after it is performed, the shortest cover can be updated in $\Oh(\log n)$ time~\cite{DBLP:journals/corr/abs-2402-17428}. No algorithm is known for computing covers in the fully dynamic setting.
    \item In the fully compressed setting, a representation of all covers of a length-$N$ string specified by an SLP of size $n$ with derivation tree of height $h$ can be computed in $\Oh(nh(n+\log^2 N))$ time~\cite{DBLP:journals/iandc/IMSIBTNS15}; with the technique of balancing SLPs~\cite{DBLP:journals/jacm/GanardiJL21}, the time complexity becomes $\Oh(n\log N(n+\log^2 N))$.
\end{itemize}

The $\Oh(n)$-time algorithms for computing covers of a length-$n$ string~\cite{DBLP:journals/ipl/ApostolicoFI91,DBLP:journals/ipl/Breslauer92,DBLP:journals/ipl/MooreS95} perform only single-letter comparisons and thus work also in the \pillar model. If there was a (much) more efficient algorithm computing the shortest cover of a string in the \pillar model, one would immediately improve or generalize all the above results, including our \cref{thm:main1}. We show that, contrary to the case of periods, no such efficient algorithm for covers exists. A proof of \cref{thm:main2} is given in \cref{sec:pillar}.

\begin{theorem}\label{thm:main2}
There is no algorithm in the \pillar model that solves any of the following problems for a length-$n$ binary string in $o(n/\log n)$ time:
\begin{itemize}
    \item check if $T$ is superprimitive;
    \item check if a given prefix of $T$ is a cover of $T$.
\end{itemize}
\end{theorem}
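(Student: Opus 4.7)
The plan is to establish the $\Omega(n/\log n)$ lower bound via an information-theoretic adversary argument over a rich family of binary strings. First, I would fix a binary word $C$ of constant length $p$ that is aperiodic and unbordered, together with a distinct word $C'\in\{0,1\}^p$, and for each $x\in\{0,1\}^m$ with $m=\Theta(n)$ define $T_x$ as a concatenation of $2m$ length-$p$ blocks: $T_x = C\,B_1(x_1)\,C\,B_2(x_2)\cdots C\,B_m(x_m)$, where $B_i(0)=C$ and $B_i(1)=C'$. Because $C$ is unbordered and aperiodic, one can show that occurrences of $C$ in $T_x$ appear exactly at those aligned block boundaries whose companion block equals $C$; hence the length-$p$ prefix $C$ is a cover of $T_x$ iff $x=0^m$, which settles the prefix-cover variant. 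A similar construction with a carefully chosen candidate cover yields the superprimitivity variant, designed so that both the ``superprimitive'' and ``non-superprimitive'' classes of $T_x$ have size $2^{\Omega(n)}$.

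Second, I would observe that each \pillar primitive on a binary string of length $n$ outputs $\Oh(\log n)$ bits of information: $\LCP$ and $\LCPR$ return a single integer in $[0\dd n]$; $\IPM$ returns an arithmetic progression specified by $\Oh(1)$ integers of $\Oh(\log n)$ bits each; and a character access on a binary alphabet returns one bit. Hence a deterministic $q$-query algorithm in the \pillar model produces at most $n^{\Oh(q)}$ distinct answer transcripts. Against any such algorithm I would run an adversary that maintains the set $\mathcal{S}\subseteq\mathcal{F}=\{T_x : x\in\{0,1\}^m\}$ of inputs consistent with the answers produced so far, and at each step selects the answer whose pre-image in $\mathcal{S}$ still contains both ``yes'' and ``no'' instances. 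A potential argument with $\Phi(\mathcal{S})=\log_2|\mathcal{S}|$, which decreases by at most $\Oh(\log n)$ per query, then shows that for $q=o(n/\log n)$ we still have $|\mathcal{S}|=2^{\Omega(n)}$, so $\mathcal{S}$ must contain inputs of both answer values, contradicting correctness.

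The main obstacle I anticipate is ensuring that the surviving set $\mathcal{S}$ always retains instances of \emph{both} answer values, not merely that $|\mathcal{S}|$ stays large. A naive family in which a single $T_x$ is non-superprimitive is immediately cracked by one $\LCP$ query comparing $T$ with its shift by $2p$, which uniquely identifies the periodic instance. The remedy is to use a family in which both classes have $2^{\Omega(n)}$ representatives, so that no single \pillar query, having only $n^{\Oh(1)}$ possible answers, can segregate them. The unbordered and aperiodic structure of $C$ is used to restrict the occurrences of $C$ and of its short extensions inside every $T_x$ so that each query effectively reveals information about only a bounded number of bits of $x$. Carrying out this robustness analysis for every \pillar primitive---in particular verifying that $\IPM$ queries, which may return short arithmetic progressions encoding several occurrences, still yield only $\Oh(\log n)$ useful bits per call---is the combinatorial heart of the proof.
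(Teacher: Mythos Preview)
Your high-level strategy (adversary over a family of binary strings, counting bits revealed per \pillar call) is reasonable, but the argument as written has a genuine gap, and it is exactly the obstacle you flag without resolving. The potential $\Phi(\mathcal{S})=\log_2|\mathcal{S}|$ only guarantees that $|\mathcal{S}|$ stays large; it does \emph{not} guarantee that $\mathcal{S}$ contains both YES and NO instances. Your proposed remedy---making both classes of size $2^{\Omega(n)}$---does not fix this: even with $n^{\Oh(1)}$ possible answers per query, nothing prevents one answer from containing all surviving YES instances and another from containing all surviving NO instances, so the adversary cannot simultaneously keep both classes alive by a pure counting argument. Concretely, in your block family $T_x=C\,B_1(x_1)\,C\,B_2(x_2)\cdots$, a single query $\LCP(T[0\dd n),T[2p\dd n))$ returns $n-2p$ precisely when $x=0^m$, so one \pillar call isolates the unique instance with cover $C$; and for more balanced families you give no mechanism bounding how much of $x$ a single $\LCP$ can expose.

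The paper avoids this difficulty by a structurally different argument. It applies a fixed length-$15$ morphism $\phi$ to a binary de~Bruijn sequence $B_k$ of order $k$, obtaining a single string $T_k$ in which every substring of length $\ge 15(k{+}1)-1$ is unique. This forces every $\LCP/\LCPR$ answer to be $\Oh(k)=\Oh(\log n)$ and makes every $\IPM$ query either trivial (long pattern) or confined to an $\Oh(\log n)$-length window (short pattern). The adversary then tracks the set of \emph{positions touched}; each query touches $\Oh(\log n)$ positions, so after $o(n/\log n)$ queries some position $i$ is untouched. At that point the adversary either leaves $T[i]=T_k[i]$ (so $T=T_k$ has cover $\mathtt{aba}$) or flips it (creating a unique $\mathtt{aaa}$ or $\mathtt{bb}$, making $T$ superprimitive). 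This single construction settles both bullet points at once, and crucially it never needs a two-sided counting argument: the de~Bruijn structure is what caps the information per query, and the endgame is a one-bit flip rather than a pigeonhole over transcripts. That de~Bruijn idea is the missing ingredient in your proposal.
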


Consequently, computing the shortest cover or a representation of all covers of a string requires $\Omega(n/\log n)$ time in the \pillar model.

\section{Preliminaries}
We assume that letters of a string $T$ are numbered from 0 to $|T|-1$, i.e., $T=T[0] \cdots T[|T|-1]$. By $T[i \dd j]=T[i \dd j+1)$ we denote a substring $T[i] \cdots T[j]$. If $T$ is given in a packed form, then packed representation of its substring $T[i \dd j]$ can be computed in $\Oh((j-i+1)/\log_\sigma n)$ time using standard word RAM operations. A substring $T[i \dd j]$ is called a prefix if $i=0$ and a suffix if $j=|T|-1$. A string $B$ that occurs in $T$ as a prefix and as a suffix is called a border of $T$.

A positive integer $p$ is called a period of string $U$ if $U[i]=U[i+p]$ holds for all $i \in [0 \dd |U|-p)$. A string $U$ is called \emph{periodic} if the smallest period $p$ of $U$ satisfies $2p \le |U|$. Otherwise, $U$ is called \emph{aperiodic}. We also use the following Periodicity Lemma.

\begin{lemma}[Fine and Wilf, \cite{fine1965uniqueness}]\label{perlemma}
If a string $U$ has periods $p$ and $q$ and $p+q \le |U|$, then $\gcd(p,q)$ is a period of $U$.
\end{lemma}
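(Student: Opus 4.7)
The plan is to prove the lemma by strong induction on $p+q$, assuming without loss of generality that $p \ge q$. The base case $p=q$ is immediate since then $\gcd(p,q)=p$ is already a period. In the inductive step, the aim will be to establish that $p-q$ is also a period of $U$; once this is done, the pair $(p-q,q)$ satisfies $(p-q)+q = p \le |U|$, so applying the induction hypothesis to $(p-q,q)$ yields that $\gcd(p-q,q)=\gcd(p,q)$ is a period of $U$, finishing the argument.

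To show that $p-q$ is a period, I would fix $i \in [0 \dd |U|-(p-q))$ and prove $U[i]=U[i+(p-q)]$ by splitting on whether $i \ge q$ or $i < q$. If $i \ge q$, I would chain $U[i]=U[i-q]$, using that $q$ is a period with $i-q \in [0 \dd |U|-q)$, and then $U[i-q]=U[i-q+p]=U[i+(p-q)]$, using that $p$ is a period; the latter is legitimate because $i-q+p < |U|$ is equivalent to the hypothesis $i < |U|-(p-q)$. If instead $i < q$, I would first invoke the hypothesis $p+q \le |U|$ to deduce $i+p < q+p \le |U|$, so the $p$-period gives $U[i]=U[i+p]$; then the index $i+p-q$ lies in $[0 \dd |U|-q)$ (again because $i+p < |U|$), so the $q$-period gives $U[i+p]=U[i+p-q]=U[i+(p-q)]$.

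The main obstacle I anticipate is bookkeeping of index ranges: each application of a period is only valid when the index lies in the appropriate interval, and it is exactly the case $i<q$ that forces us to use the hypothesis $p+q \le |U|$ (to make $i+p$ legal for the $p$-period). Outside this regime the claim is known to be tight, as witnessed by standard Fibonacci-like extremal examples, so I expect the write-up to consist essentially of verifying these two range conditions carefully and then invoking the Euclidean-style induction.
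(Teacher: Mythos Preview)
Your argument is correct: the case split on $i \ge q$ versus $i < q$ handles all indices, the index-range checks go through exactly as you describe (with the hypothesis $p+q \le |U|$ entering precisely in the second case), and the Euclidean recursion $\gcd(p,q)=\gcd(p-q,q)$ closes the induction. This is the classical proof of the Periodicity Lemma.

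The paper itself does not give a proof of this lemma; it is stated as a well-known result and attributed to Fine and Wilf~\cite{fine1965uniqueness}. So there is nothing to compare against --- your write-up simply supplies a standard proof that the paper chose to omit.
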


For a string $X$ and non-negative integer $k$, by $X^k$ we denote a concatenation of $k$ copies of $X$. A non-empty string $U$ is \emph{primitive} if $U=X^k$ implies that $k=1$. A string of the form $X^2$ is called a square. If $X$ is primitive, the square $X^2$ is said to be \emph{primitively rooted}.

\begin{lemma}[Three Squares Lemma, \cite{DBLP:journals/algorithmica/CrochemoreR95}]\label{lem:3sq}
If a string $U$ has primitively rooted square prefixes $X^2$, $Y^2$, $Z^2$ such that $|X|<|Y|<|Z|$, then $|Z|>|X|+|Y|$.
\end{lemma}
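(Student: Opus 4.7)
I would argue by contradiction: assume $|Z|\le|X|+|Y|$, aiming to contradict primitivity of one of $X$, $Y$, $Z$. Set $p:=|Y|-|X|$ and $q:=|Z|-|Y|$; the hypothesis rewrites as $p+q=|Z|-|X|\le|Y|$, which is precisely the length threshold in the Periodicity Lemma (\cref{perlemma}) applied to a length-$|Y|$ window. The plan is to extract two small periods on overlapping prefixes of $U$ and invoke \cref{perlemma}.

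The first key step extracts a short period of $Y$ from the overlap of $Y^2$ and $Z^2$. Since $|X|<|Y|$ gives $|Z|<2|Y|$ and hence $0<q<|Y|$, writing $Z=Y\cdot Y'$ with $|Y'|=q$ yields $Z^2 = YY'YY'$; matching this against the $Y^2$-prefix of $Z^2$ forces $Y = Y' \cdot Y[0\dd|Y|-q)$, which is equivalent to $Y$ having period $q$. A symmetric analysis applied to the pair $(X^2,Y^2)$ aims to produce a period $p$ on some sufficiently long prefix of $U$ containing $Y$ (or $Z$). Combining these two periods via \cref{perlemma} (with the length hypothesis $p+q\le|Y|$ furnished by the contradiction assumption) then yields $d:=\gcd(p,q)$ as a common period; a further application of \cref{perlemma} lowers the shortest period of the implicated string to a divisor of its length strictly smaller than the length itself, thereby violating primitivity.

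\textbf{Main obstacle.} The subtle point is the second period extraction. A direct copy of the first argument yields at best that $X$ has period $p$, and only in the regime $2|X|>|Y|$; primitivity of $X$ then forces $p\nmid|X|$, so $Y=X\cdot X[0\dd p)$ does \emph{not} inherit the period $p$ automatically, and a naive closure fails. Consequently one must handle two regimes separately --- $2|X|>|Y|$ and $2|X|\le|Y|$ --- using the occurrences of $X$ and $Y$ as prefixes of $U$ forced by the three nested primitively-rooted squares (for instance $X$ appearing at positions $0$ and $|X|$ from $X^2$ and at positions $|Y|$ and $|Y|+|X|$ from $Y^2$ when $2|X|\le|Y|$), and in the adverse regime transfer the contradiction to $Z$ via a period analysis on the length-$|Z|$ prefix. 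Navigating these cases and verifying the Fine--Wilf length hypothesis in each is the technical heart of the Crochemore--Rytter argument.
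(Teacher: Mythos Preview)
The paper does not give a proof of \cref{lem:3sq}; it is stated with a citation to Crochemore and Rytter~\cite{DBLP:journals/algorithmica/CrochemoreR95} and used as a black box. So there is no in-paper argument to compare your proposal against.

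That said, your plan is exactly the classical Crochemore--Rytter route: assume $|Z|\le|X|+|Y|$, extract a period $q=|Z|-|Y|$ of $Y$ from the overlap of $Y^2$ inside $Z^2$ (which requires $|Z|<2|Y|$, a consequence of the contradiction hypothesis), attempt the analogous extraction for $p=|Y|-|X|$, and then close with Fine--Wilf. You have also correctly identified the genuine difficulty: the second extraction only gives a period $p$ of $X$ when $2|X|>|Y|$, and primitivity of $X$ blocks this from propagating naively to $Y$, so a case split on $2|X|\lessgtr|Y|$ and a more careful use of the four occurrences of $X$ (at positions $0,|X|,|Y|,|Y|+|X|$ inside $Y^2$ in the regime $2|X|\le|Y|$) is needed. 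This is precisely how the original proof proceeds, so your outline is sound; the remaining work is the bookkeeping in each branch to verify the Fine--Wilf length condition and to pin the contradiction on the correct one of $X$, $Y$, $Z$.
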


\section{Sublinear-Time Covers}\label{sec:sublinear}

Let $c=\lfloor\frac{1}{6}\log_\sigma n\rfloor$. To show \cref{thm:main1} we divide the set $\Covers(T)$ into two subsets:
\begin{itemize}
\item $\SCovers(T)=\Covers(T)\cap [1\dd c]$ of short cover lengths, and
\item $\LCovers(T)=\Covers(T)\cap [c+1\dd n]=\Covers(T)\setminus \SCovers(T)$ of long cover lengths
\end{itemize}
and compute their representations separately. If $c=0$, there are only long covers.

\newcommand{\F}{\mathcal{F}}
\begin{lemma}\label{lem:short}
The representation of $\SCovers(T)$ can be computed in $\Oh(n/\log_\sigma n)$ time.
\end{lemma}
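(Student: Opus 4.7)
The plan is to leverage the fact that $c\log\sigma \le \tfrac{1}{6}\log n$, which lets us verify cover candidates locally on short windows of $T$. First I would tile $T$ by overlapping length-$3c$ windows $W_j := T[jc\dd jc+3c)$ at every multiple-of-$c$ starting position, together with the length-$3c$ suffix window; there are $\Oh(n/c)$ such windows, and each is $3c\log\sigma \le \tfrac{1}{2}\log n$ bits wide, so it fits in half a machine word and is extractable from the packed representation of $T$ in $\Oh(1)$ time. To each window I would attach a \emph{middle}: $[0,2c)$ for the first window, $[(j+1)c,(j+2)c)$ for an interior $W_j$, and $[n-2c,n)$ for the suffix window. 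By construction every position of $T$ lies in the middle of some window, and every middle is separated from each boundary of its window by at least $c$ positions or coincides with a boundary of $T$.

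The key claim is that for $\ell\in[1,c]$, the prefix $C_\ell := T[0\dd\ell)$ is a cover of $T$ iff every middle position of every window is covered by some occurrence of $C_\ell$ fully contained in that window. One direction is immediate. For the converse, any occurrence of $C_\ell$ in $T$ covering a middle position $p$ of a window $W$ extends at most $\ell-1 < c$ positions to either side of $p$, and so by the chosen middle placement it fits inside $W$. As a bonus, covering position $n-1$ (which lies in the middle of the suffix window) forces $C_\ell$ to be a suffix of $T$, so membership in $\SCovers(T)$ automatically implies being a border.

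The algorithm then proceeds as follows. First, deduplicate the windows by using their half-word values as indices into a lazily initialized boolean table of size $\sqrt{n}$, in total time $\Oh(n/c)$. Second, for each of the at most $\sigma^{3c}\le\sqrt{n}$ distinct window values $W$, naively compute in $\Oh(c^2)$ time a $c$-bit word $B_W$ whose $\ell$-th bit is $1$ iff $C_\ell$ covers the middle of $W$ by occurrences inside $W$; this requires only the length-$c$ prefix $T[0\dd c)$, itself a sub-half-word of the first window. Third, aggregate $B := \bigwedge_j B_{W_j}$ with $\Oh(n/c)$ table lookups and bitwise ANDs. The total cost is $\Oh(n/c) + \Oh(\sqrt{n}\cdot c^2) = \Oh(n/\log_\sigma n)$, and the set bits of $B$ encode $\SCovers(T)$; since $|\SCovers(T)|\le c = \Oh(\log n)$, it is output explicitly as singleton arithmetic progressions.

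The main technical nuisance I anticipate is the boundary setup — namely, choosing the precise ranges for the last interior windows $W_j$ and the suffix window so that their middles jointly cover $[0,n)$ — but because the middles are only required to \emph{cover} $[0,n)$ and not to partition it, this boils down to a short case analysis on the two extremal windows.
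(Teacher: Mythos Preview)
Your proposal is correct and follows essentially the same approach as the paper: both tile $T$ by length-$3c$ windows at stride $c$, deduplicate the $\Oh(n/c)$ windows via direct addressing into a table of size $\sigma^{3c}\le\sqrt{n}$, and verify each candidate $C_\ell$ with $\ell\le c$ by checking that occurrences of $C_\ell$ inside each distinct window cover that window's designated middle. The only differences are cosmetic. First, you handle the two boundaries by giving the first and suffix windows enlarged ``middles'' reaching the ends of $T$ (which automatically encodes the border condition), whereas the paper pads $T$ to a multiple of $c$, uses the literal middle third for every window, and checks the suffix condition separately. Second, you precompute a $c$-bit mask $B_W$ per distinct window and aggregate by a bitwise \textsc{and}, while the paper loops over candidates and scans the distinct windows for each; both give the same $\Oh(\sqrt{n}\,c^2)+\Oh(n/c)$ bound. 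One small thing to make explicit in your write-up: after deduplication a window value $W$ no longer knows whether it is the first, last, or an interior window, so the precomputed $B_W$ should be with respect to the interior middle, and the two boundary windows should be processed separately (an extra $\Oh(c^2)$), exactly as you hint in your final paragraph.
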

\begin{proof}
Let $\F$ be the set of all the factors of $T$ of length $3c$ that start at positions that are multiples of $c$. If the length of $T$ is not a multiple of $c$, when computing $\F$ we extend $T$ with at most $c$ arbitrary letters. Let us notice that every string in $\F$ fits in a machine word, so those strings can be treated as integers.

To compute $\F$, we first construct a Boolean array of all the possible length-$3c$ strings, and then iterate through the length-$3c$ substrings of $T$ starting at positions that are multiples of $c$, addressing the array directly through those integer representations. The size of the array as well as $|\F|$ is bounded by $\sigma^{3c}\le \sigma^{\frac12 \log_\sigma n}=\sqrt{n}$. Iterating through all the considered substrings takes $\Oh(n/c)$ time. Hence, this computation takes $\Oh(n/\log_\sigma n)$ time in total.

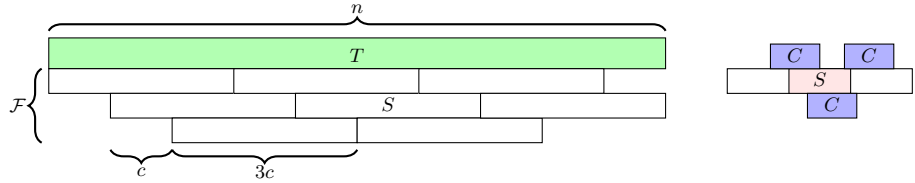
\begin{figure}[htpb]
\centering{\begin{tikzpicture}[scale=0.82,transform shape]

\tikzstyle{dot}=[inner sep=0.045cm, circle, draw, fill=red]

\draw[fill=green!30!white] (0,0) rectangle +(10, 0.5);
\node [above] at (5, 0) {$T$};
\draw[thick,decorate,decoration={brace,raise=5pt,amplitude=5pt}] 
      (0,0.4)--(10,0.4) node[midway,above,yshift=11pt] {$n$};

\draw[fill=white] (0,0) rectangle +(3, -0.4); 
\draw[fill=white] (1,-0.4) rectangle +(3, -0.4); 
\draw[fill=white] (2,-0.8) rectangle +(3, -0.4);  
\draw[fill=white] (3,0) rectangle +(3, -0.4);  
\draw[fill=white] (4,-0.4) rectangle +(3, -0.4); 
\draw[fill=white] (5,-0.8) rectangle +(3, -0.4);  
\draw[fill=white] (6,0) rectangle +(3, -0.4);
\draw[fill=white] (7,-0.4) rectangle +(3, -0.4);
\node [above] at (5.5, -0.8) {$S$};
\draw[thick,decorate,decoration={brace,raise=5pt,amplitude=5pt}] 
      (0.1,-1.2)--(0.1,0) node[midway,left,xshift=-10pt] {$\mathcal{F}$};
\draw[thick,decorate,decoration={brace,raise=5pt,amplitude=5pt}] 
      (2,-1.1)--(1,-1.1) node[midway,below,yshift=-11pt] {$c$};
\draw[thick,decorate,decoration={brace,raise=5pt,amplitude=5pt}] 
      (5,-1.1)--(2,-1.1) node[midway,below,yshift=-10pt] {$3c$};

\draw[fill=white] (11,0) rectangle +(3, -0.4);
\draw[fill=red!10!white] (12,0) rectangle +(1, -0.4);
\node [above] at (12.5, -0.4) {$S$};

\draw[fill=blue!30!white] (11.7,0) rectangle +(0.8, 0.4); 
\node [above] at (12.1, 0) {$C$};
\draw[fill=blue!30!white] (12.3,-0.4) rectangle +(0.8, -0.4); 
\node [above] at (12.7, -0.8) {$C$};

\draw[fill=blue!30!white] (12.9,0) rectangle +(0.8, 0.4); 
\node [above] at (13.3, 0) {$C$};
\end{tikzpicture}}
\caption{Algorithm for checking a single candidate for a short cover. For each string in $\F$ we check if the occurrences of $C$ therein cover its middle part.}\label{fig:placeholder}
\end{figure}

The next claim resembles to some extent a property of seeds; cf.\ \cite[Lemma 2.2]{DBLP:journals/talg/KociumakaKRRW20}. See \cref{fig:placeholder} for an illustration.

\begin{claim}
For $i\in[1 \dd c]$, $C=T[0\dd i)$ is a cover of $T$ if and only if $C$ is a border of $T$ and the occurrences of $C$ in each string $S$ in $\F$ cover the middle length-$c$ part of $S$.
\end{claim}
\begin{proof}
$(\Rightarrow)$ Assume that $C$ is a cover of $T$. As $C$ has to cover the first and the last position of $T$, $C$ is a border of $T$. For every integer multiple $ic$ of $c$ such that $(i+1)c < |T|$, occurrences of $C$ in $T$ have to cover $U=T[ic \dd (i+1)c)$. These occurrences need to be contained in $S=T[(i-1)c \dd \min((i+2)c,|T|-1)]$. We have $S \in \F$ (possibly after appending $3c-|S|$ letters) and $C$ covers the middle length-$c$ part of $S$.

$(\Leftarrow)$ Assume that $C$ is a border of $T$ and the occurrences of $C$ in each string $S$ in $\F$ cover the middle length-$c$ part of $S$. Since $C$ is a border of $T$, its occurrences cover substrings $T[0 \dd |C|)$ and $T[|T|-|C| \dd |T|)$. Moreover, for every integer multiple $ic$ of $c$ such that $(i+1)c < |T|$, occurrences of $C$ in $T$ cover $T[ic \dd (i+1)c)$. Hence, occurrences of $C$ in $T$ cover all positions of $T$.
\qed\end{proof}

Now to compute $\SCovers(T)$ we iterate through all the $c$ lengths of candidates for a short cover independently.

For a single candidate $C=T[0\dd i)$, it is enough to check if $C$ is a suffix of $T$, which can be done in $\Oh(1)$ time, and then for each substring $S\in \F$ check if the occurrences of $C$ in $S$ cover its middle part.
The latter can be done naively in $\cO(|C|+|S|)$ time, which sums up to $\cO(c\cdot |\F|)=\Oh(\sqrt{n}\cdot \log_\sigma n)$ time for each $i$, and $\Oh(\sqrt{n}\cdot \log^2_\sigma n)=o(n/\log_\sigma n)$ time in total for all the candidates.

The result is reported in the form of the $\Oh(c)=\Oh(\log n)$ lengths of short covers (arithmetic progressions of length $1$).
\qed\end{proof}

In the computation of long covers, we use Internal Pattern Matching. In particular, we use $\IPM$ queries which, given two substrings $X$, $Y$ of $T$ such that $|X| \le |Y| \le 2|X|$, return the set of occurrences of $X$ in $Y$ represented as an arithmetic progression. Moreover, we use Period Queries that return the set of all periods (equivalently, borders) of any substring of $T$. We need to apply a period query only to $T$ itself. Such a query returns, for every $d$ being an integer power of two, the set of lengths of all borders of $T$ of length between $d$ and $2d$ represented as an arithmetic progression. If an arithmetic progression has length greater than two, then the difference $p$ of the progression is the common shortest period of all the borders represented by this progression.

\begin{theorem}[\cite{DBLP:journals/corr/KociumakaRRW13}]\label{thm:IPM}
Assume that a text $T$ of length $n$ over integer alphabet of size $\sigma$ is given in a packed form. After $\Oh(n/\log_\sigma n)$ time preprocessing, one can answer an $\IPM$ query for substrings $X$, $Y$ of $T$ in $\Oh(|Y|/|X|)$ time and a Period Query for $T$ in $\Oh(\log n)$ time.
\end{theorem}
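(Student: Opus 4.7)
The statement is attributed to Kociumaka, Radoszewski, Rytter, and Waleń, so in the paper it will presumably be imported as a black box; my plan is nonetheless to sketch how one would reconstruct it, combining packed LCE infrastructure with standard periodicity-based techniques.

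The preprocessing would proceed in two layers. First, I would build a packed string data structure supporting constant-time longest common extension ($\LCP$ and $\LCPR$) queries on $T$ in $\Oh(n/\log_\sigma n)$ time; this is available, for instance, by combining the packed suffix-tree construction of Kempa and Kociumaka with a standard range-minimum structure on the $\LCP$ array, or by tabulating all strings of length $\tfrac12\log_\sigma n$ (there are $\sqrt{n}$ of them) and treating blocks as meta-characters. Second, I would materialize a \emph{difference cover} sample of the positions of $T$ of rate $\tau = \Theta(\log_\sigma n)$: a subset $D \subseteq [0 \dd n)$ of size $\Oh(n/\tau)$ such that for every distance $d \ge \tau$, there is a shift $s$ with $\{s, s+d\} \subseteq D$. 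On $D$, I would build an auxiliary index (sorted by $T$-suffix-rank, or equivalently a sparse suffix tree) so that occurrences of sampled patterns can be enumerated in arithmetic-progression form in time proportional to their output size.

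For an $\IPM$ query on substrings $X$ and $Y$ with $|X| \le |Y| \le 2|X|$, the Periodicity Lemma (\cref{perlemma}) guarantees that the occurrences of $X$ in $Y$ form a single arithmetic progression of length $\Oh(|Y|/|X|)$, with difference equal to the shortest period of $X$ whenever there are at least three occurrences. The plan is to pick a position inside $X$ that lies in $D$ after an appropriate shift, use the sparse index to obtain the $\Oh(|Y|/|X|)$ candidate occurrences inside $Y$, and then verify each candidate by two $\Oh(1)$-time $\LCP/\LCPR$ queries. Fitting the output into an arithmetic progression is a constant-time postprocessing step on the resulting list.

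For the Period Query on $T$, I would iterate over the $\Oh(\log n)$ dyadic length ranges $[2^k \dd 2^{k+1})$. The borders of $T$ in this range correspond precisely to positions at which the prefix $T[0 \dd 2^k)$ occurs inside the suffix region $T[n-2^{k+1} \dd n)$, which has length at most $2 \cdot 2^k$; this is exactly an $\IPM$ query of cost $\Oh(1)$, yielding an arithmetic progression of candidate border lengths that can be trimmed to true borders with one further $\LCP$ check. Summing $\Oh(1)$ over the $\Oh(\log n)$ ranges gives the claimed $\Oh(\log n)$ query time. The main obstacle is the packed LCE/suffix construction in $\Oh(n/\log_\sigma n)$ time, since the rest is classical once LCE is available; I would invoke the packed suffix-tree construction as a black box rather than redevelop it.
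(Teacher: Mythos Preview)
Your expectation is correct: the paper imports \cref{thm:IPM} as a black box from the cited reference and gives no proof of its own. There is therefore nothing in the paper to compare your sketch against.

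As a side remark on the sketch itself: the Period Query part (one $\IPM$ query per dyadic length range, plus a single $\LCP$ extension to trim the resulting arithmetic progression to genuine borders) is essentially the standard reduction and is fine. The $\IPM$ part is in the right spirit---sparse sampling plus $\Oh(1)$-time LCE verification---but the ``difference cover'' you describe is not quite the mechanism used, and the property you state (for every $d\ge\tau$ there exists $s$ with $\{s,s+d\}\subseteq D$) is not what a difference cover gives nor what is needed here: one needs, for every pair of text positions, a \emph{common} offset at which both are sampled, and one needs this to be locatable without knowing the occurrence position in advance. The actual construction in the cited work relies on a consistent, string-dependent sampling (in the vein of deterministic coin tossing / string synchronizing sets) rather than a purely arithmetic difference cover; this is what lets a sampled position inside $X$ align with sampled positions at every occurrence of $X$ in $Y$. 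That subtlety is exactly the ``main obstacle'' you flag, and you are right to treat it as a black box.
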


\begin{lemma}\label{lem:long}
The representation of $\LCovers(T)$ can be computed in $\Oh(n/\log_\sigma n)$ time.
\end{lemma}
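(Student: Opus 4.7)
The plan is to use the Period Query of \cref{thm:IPM} to obtain all borders of $T$ encoded as $\Oh(\log n)$ arithmetic progressions, one per dyadic range $[d, 2d)$ of border lengths, and then, for each such progression, to determine which of its borders are covers of $T$ in $\Oh(n/d)$ time. Summing $\Oh(n/d)$ over progressions whose border lengths exceed $c$ telescopes to $\Oh(n/c) = \Oh(n/\log_\sigma n)$, matching the claim.

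Fix a progression $\mathcal{P} = \{\ell, \ell + p, \ldots, \ell + jp\}$ of border lengths in $[d, 2d)$ with $\ell > c$, and write $B_i$ for the border of length $\ell + ip$ and $C := B_0$. In the main case $j \ge 2$, we first argue that the shortest period of $C$ equals $p$: since $T[0 \dd \ell+p) = T[|T|-\ell-p \dd |T|)$ together with $T[|T|-\ell \dd |T|) = T[0\dd\ell)$ forces $T[0 \dd \ell+p)$ to have period $p$ (and hence so does $C$), and the dyadic-range constraint gives $p < \ell/2$ for $j \ge 2$, \cref{perlemma} applies. If $C$'s shortest period were $p_C$ with $p_C < p$ and $p_C \mid p$, then $T[0\dd \ell+p)$ would inherit period $p_C$, and shifting the border identity by $p-p_C$ (a multiple of $p_C$) shows that $\ell + p_C$ is also a border of $T$ in $[d, 2d)$, contradicting the maximality of the Period Query progression $\mathcal{P}$. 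Consequently, $\Oh(n/\ell) = \Oh(n/d)$ IPM queries on overlapping length-$2\ell$ windows of $T$ enumerate the $C$-occurrences as arithmetic progressions with common difference $p$, which a single linear scan merges across windows into the maximal \emph{$p$-runs} of $C$.

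We then claim that the set of covers in $\mathcal{P}$ is exactly $\{B_0, \ldots, B_{\min(r^* - 1, j)}\}$ when $C$ covers $T$, and empty otherwise, where $r^*$ denotes the length of the shortest $p$-run. For the forward inclusion, within a $p$-run of $C$-occurrences of length $r$, the $B_i$-occurrences (present for $i \le r - 1$) span exactly the interval $[q, q + (r-1)p + \ell)$ that the $C$-occurrences cover, so for $i \le r^* - 1$ the positions covered by $B_i$ and by $C$ coincide. For the converse, any $B_i$-occurrence with $i \ge r^*$ has its length-$|B_i|$ substring of $T$ of period $p$, so it lies in a maximal period-$p$ region of $T$ of length at least $|B_i| \ge \ell + r^* p$; the maximal period-$p$ region housing the shortest $p$-run has length strictly less than $\ell + r^* p$ (otherwise an $(r^*+1)$-th $C$-occurrence would fit and extend the run), so the two regions are distinct and, being pairwise disjoint, leave the shortest $p$-run's region uncovered by any $B_i$-occurrence.

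Both $r^*$ and the cover check for $C$ are read off from a linear scan of the $\Oh(n/d)$ $C$-occurrence intervals sorted by starting position. Progressions with $j \le 1$ (at most two candidate borders, where $C$'s shortest period may be strictly below $p$) are handled by applying the same IPM-based occurrence enumeration and coverage check directly to each of the two borders, again within $\Oh(n/d)$ time. The main obstacle is the periodicity argument pinning the shortest period of $C$ to $p$ for long progressions, which is what lets the IPM output be read as $p$-runs directly; with that in hand, the cover characterisation and its algorithmic realisation are routine linear-time bookkeeping on the returned arithmetic progressions.
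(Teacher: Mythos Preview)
Your approach is essentially the same as the paper's: use the Period Query to get $\Oh(\log n)$ arithmetic progressions of border lengths, and for each progression in $[d,2d)$ determine its cover cut-off via $\IPM$ and $p$-runs of the shortest border, in $\Oh(n/d)$ time. The overall strategy and time analysis are correct.

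However, your converse argument (that $B_i$ with $i\ge r^*$ is not a cover) via ``maximal period-$p$ regions'' has two flaws as written. First, the claimed bound that the region housing the shortest $p$-run has length strictly less than $\ell+r^*p$ is not right: the maximal period-$p$ extension can protrude by up to $p-1$ on \emph{each} side without yielding a new $C$-occurrence, so the region can have length up to $\ell+(r^*+1)p-2$. Second, maximal period-$p$ regions of $T$ are not pairwise disjoint in general; two such regions may share up to $p-1$ positions. So the inference ``distinct regions $\Rightarrow$ the short run's region is left uncovered'' does not go through.

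The fix (which is what the paper does) is to argue directly with $C$-occurrences rather than regions. A $B_i$-occurrence with $i\ge r^*$ induces $i+1\ge r^*+1$ consecutive $C$-occurrences at spacing $p$, hence lies in a $p$-run of length $\ge r^*+1$; in particular, it does not lie in the shortest $p$-run. Since consecutive $p$-runs are separated by a gap of more than $\ell-p$ starting positions (otherwise \cref{perlemma} would force a smaller period of $C$), any $B_i$-occurrence from an earlier $p$-run covers only positions $<q+p$, and any from a later $p$-run starts at a position $>q+\ell-p$, where $q$ is the start of the shortest $p$-run. With $2p<\ell$ (which you correctly derive from $j\ge2$), the position $q+p$ is therefore uncovered. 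This replaces your region-length and disjointness claims and completes the argument; the paper additionally verifies the two shortest borders directly, but your handling of $r^*=1$ via this corrected argument is fine.

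A smaller imprecision: the $\IPM$-returned progressions need not all have difference exactly $p$---a two-element progression may have difference equal to a period of $C$ that exceeds $\ell-p$ and is not a multiple of $p$. This does not affect the $\Oh(n/d)$ time bound (such a pair simply splits into two singleton $p$-runs), but your sentence ``arithmetic progressions with common difference $p$'' overstates what $\IPM$ guarantees.
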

\begin{proof}
As we already noticed, a cover of $T$ is in particular its border. We ask a Period Query of \cref{thm:IPM} to compute a representation of the set of all borders of $T$. We disregard arithmetic progressions such that all their elements are smaller than $c$, as they can only correspond to the case of short covers that was already considered. Moreover, we trim the at most one remaining arithmetic progression that contains elements smaller than $c$ so that it contains only border lengths greater than $c$.

For each arithmetic progression there exists a cut-off value $t$ such that all borders of length at most $t$ represented by the progression are covers of $T$, while the longer ones are not covers of $T$. 
(This is because a shorter border from the progression is a  cover of a longer border from the progression.)
It is sufficient to compute this cut-off value for each progression.

We consider the progressions separately. Let us consider a progression $\Gamma$ of border lengths in $[d \dd 2d]$. For the two shortest borders $B_1$, $B_2$ represented by the progression $\Gamma$ (or fewer if progression $\Gamma$ contains at most one element), we use $\IPM$ queries for $B_i$ and substrings of $T$ of length $2|B_i|-1$ starting at positions $\equiv 0 \pmod{|B_i|}$ to find a representation of the set of occurrences of $B_i$ in $T$ as $\Oh(n/d)$ arithmetic progressions. This representation allows us to easily check in $\Oh(n/d)$ time if $B_i$ is a cover of $T$.

If any of $B_1$ and $B_2$ exists and is not a cover of $T$, we can safely ignore all the remaining borders in progression $\Gamma$. Otherwise, if at least three borders are represented by $\Gamma$, we know that the difference $p$ of the progression is the common smallest period of all borders represented by $\Gamma$. In $\Oh(n/d)$ time we partition the already computed occurrences of $B_1$ in $T$ into maximal arithmetic progressions of consecutive occurrences with difference $p$. Let $\Delta$ be the minimum length of such an arithmetic progression of occurrences of $B_1$. Then exactly the $\Delta$ shortest borders of the progression $\Gamma$ are covers of $T$, or all borders if the progression contains less than $\Delta$ elements.

Let us argue for the correctness of the algorithm. Let $B_1,B_2,\ldots,B_r$ be all borders represented by progression $\Gamma$ ordered by increasing lengths. We have $|B_{i+1}|=|B_i|+p$ for all $i \in [1 \dd r)$. It suffices to note that (1) $B_{\Delta'}$ for $\Delta'=\min(\Delta,r)$ is a cover of $T$ and (2) $B_{\Delta+1}$, if it exists, is not a cover of $T$.

As for (1), as all arithmetic progressions with difference $p$ of occurrences of $B_1$ in $T$ have length at least $\Delta'$, they imply occurrences of $B_{\Delta'}$ that cover the same set of positions of $T$ as the occurrences of $B_1$, i.e., all positions.

As for (2), assume that $B_{\Delta+1}$ exists (with $\Delta \ge 2$) and let $i,i+p,\ldots,i+(\Delta-1)p$ be a maximal arithmetic progression of occurrences of $B_1$ in $T$. The previous arithmetic progression has its last element smaller than $i-(|B_1|-p)$, as otherwise, by \cref{perlemma}, $B_1$ would have a period smaller than $p$. Similarly, the next arithmetic progression starts at a position greater than $i+(\Delta-1)p+(|B_1|-p)$. An occurrence of $B_{\Delta+1}$ in $T$ implies an arithmetic progression of occurrences of $B_1$ with difference $p$ and $\Delta+1$ elements starting at the same position. Thus none of the occurrences of $B_1$ at positions $i,i+p,\ldots,i+(\Delta-1)p$ extend to an occurrence of $B_{\Delta+1}$, the last position of $T$ covered by an occurrence of $B_{\Delta+1}$ from a previous arithmetic progression is smaller than $i+p$, and the first position of $T$ covered by an occurrence of $B_{\Delta+1}$ from a next arithmetic progression is greater than $i+(\Delta-1)p$. Hence, position $i+p$ of $T$ is not covered by occurrences of $B_{\Delta+1}$. This proves (2) and concludes correctness of the algorithm.

Overall, an arithmetic progression of border lengths in $[d \dd 2d]$ is processed in $\Oh(n/d)$ time. It suffices to consider $d$ such that $2d \ge c$, i.e., $d=2^i$ for $i \ge (\log_2 c) - 1$. The time complexity is thus proportional to:
$$\sum_{i=\lfloor \log_2 c \rfloor-1}^{\lfloor\log_2 n\rfloor} \frac{n}{2^i}\ \le\ \frac{n}{2^{\lfloor \log_2 c \rfloor-1}} \sum_{i=0}^{\infty} \frac{1}{2^i}\ =\ \Oh(n/c)\ =\ \Oh(n/\log_\sigma n),$$
as desired.
\qed\end{proof}

\cref{thm:main1} follows directly from \cref{lem:short,lem:long}.

\begin{remark}
The algorithm for computing long covers works in $\Oh(n/\log_\sigma n)$ time in the \pillar model. For a constant $\sigma$, the complexity matches our lower bound of \cref{thm:main2}. However, the computation of short covers in \cref{lem:short} works in $\Theta(n)$ time in the \pillar model.
\end{remark}

\section{Sublinear Data Structure for Cover Array}\label{sec:prefixes}

\subsection{Why Representing the Cover Array in Sublinear Space can be a Challenge}
The cover array may require $\Theta(n \log n)$ bits to represent in a straightforward manner. In particular, the array may contain $\Theta(n)$ different values; this is true even if we disregard trivial positions $i$ such that $\Cov_T[i]=i$ and positions $i$ such that $T[0 \dd i)$ is periodic, as shown in the following \cref{ex:ababa}.

\begin{example}\label{ex:ababa}
Let $T=\mathtt{a}^{2m}\mathtt{b}\mathtt{a}^{3m}\mathtt{b}\mathtt{a}^{2m}$ for positive integer $m$ be a string of length $\Theta(m)$. Then all prefixes of $T$ of length at least $2m+1$ are aperiodic and the last $m+1$ positions of the array $\Cov_T$ contain the following lengths of proper covers: $3m+1,3m+2,\ldots,4m+1$.
\end{example}

\cref{ex:ababa} might still not be fully convincing that a sublinear-sized representation of the cover array is not obvious. Indeed, the the cover array of the string family from \cref{ex:ababa} has an especially simple structure (a prefix consisting only of ones, a substring with an arithmetic sequence with difference 1 corresponding to superprimitive prefixes, and a suffix with arithmetic sequence with difference 1). Below in \cref{cor} we give a different example, that in a  Fibonacci string all but a logarithmic number of prefixes have a proper cover and (except for a short prefix) no \emph{two} consecutive positions of the cover array form an arithmetic sequence of difference 1. The cover array of a Fibonacci string contains a logarithmic number of different values.

Let us recall that the Fibonacci strings are defined as follows: $\Fib_0=\mathtt{b}$, $\Fib_1=\mathtt{a}$, and $\Fib_m=\Fib_{m-1}\Fib_{m-2}$ for $m>1$. All covers of whole Fibonacci strings (as well as other types of quasiperiodicity) were characterized in~\cite{DBLP:journals/arscom/ChristouCI16} (see also \cite{singh:hal-02141636} for similar results on Tribonacci strings). Moreover, a complete characterization of the lengths of shortest covers of cyclic shifts of Fibonacci strings was shown~\cite{DBLP:journals/tcs/CrochemoreIRRSW21}. However, apparently, the structure of the cover array of Fibonacci strings was not studied before. The theorem below shows the recursive structure of the array; see \cref{fig:cov_fib} for a concrete example.

Let $\Fib$ be the infinite Fibonacci string (the limit of strings $\Fib_m$). For any $m>0$, $\Fib_m$ is a prefix of $\Fib$, and hence also $\Cov_{\Fib_m}$ is a prefix of $\Cov_{\Fib}$. Thus it is enough to characterize the values of $\Cov_{\Fib}$. Let $F_k=|\Fib_k|$.

\begin{theorem}\label{thm:fib}
In the corner cases $\Cov_{\Fib}[\ell]$ is equal to
\begin{itemize}
\item $\ell$ if $\ell\le 2$,
\item $3$ if $\ell=F_k$ for odd $k\ge 3$,
\item $5$ if $\ell=F_k$ for even $k\ge 4$,
\item $\ell$ if $\ell=F_{k}-1$ or $\ell=2F_{k}-1$ for $k \ge 4$.
\end{itemize}
Otherwise, $\Cov_{\Fib}[\ell]=\Cov_{\Fib}[\ell-F_{k-1}]$, where  $F_k<\ell<F_{k+1}$.
\end{theorem}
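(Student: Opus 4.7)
The plan is to proceed by strong induction on $\ell$, handling the base, singular, and recursive cases in turn. For $\ell \le 2$ the claim is immediate since a string of length at most $2$ has no proper cover. For $\ell = F_k$ with $k \ge 3$, I would cite the characterization of the shortest proper cover of a whole Fibonacci string from \cite{DBLP:journals/arscom/ChristouCI16}: it is $\Fib_3$ of length $3$ for odd $k \ge 3$, and $\Fib_4$ of length $5$ for even $k \ge 4$.

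The crux is the recursive case $F_k < \ell < F_{k+1}$ for $\ell$ not a corner value, so in particular $\ell \le F_{k+1} - 2$. The key structural fact is the classical Fibonacci overlap identity: $\Fib_k \Fib_{k-1}$ and $\Fib_{k-1} \Fib_k$ agree on their first $F_{k+1} - 2$ characters. This implies that the prefix $P := \Fib[0 \dd \ell)$ has period $F_{k-1}$, and hence $P' := \Fib[0 \dd \ell - F_{k-1})$ is simultaneously a prefix and a suffix of $P$. I would then establish a correspondence between proper covers of $P$ of length at most $|P'|$ and proper covers of $P'$: every cover $C$ of $P'$ extends to a cover of $P$ by shifting its occurrences forward by $F_{k-1}$ to cover the ``tail'' $P[|P'| \dd \ell)$; conversely, every cover of $P$ of length at most $|P'|$ is a border of $P'$ whose occurrences in $P$ that stick out of $P'$ can be shifted back by $F_{k-1}$ into $P'$. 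Together with the inductive hypothesis, which guarantees that the shortest cover of $P'$ is short enough to fit in this correspondence, this yields $\Cov_{\Fib}[\ell] = \Cov_{\Fib}[\ell - F_{k-1}]$.

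For the corner cases $\ell = F_k - 1$ and $\ell = 2F_k - 1$ with $k \ge 4$, I would show directly that $\Fib[0 \dd \ell)$ is superprimitive by enumerating its (logarithmically many) borders and checking that none covers all positions. The Periodicity Lemma (\cref{perlemma}) rules out borders whose existence as a cover would force a period smaller than $F_{k-1}$, and the Three Squares Lemma (\cref{lem:3sq}) bounds the relevant primitively rooted square lengths occurring in the prefix; together they reduce the verification to finitely many candidates whose suffix/prefix mismatches or uncovered positions can be exhibited explicitly, using the recursive structure $\Fib_k = \Fib_{k-1}\Fib_{k-2}$.

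The main obstacle will be the sub-case of the recursive step where $\ell < 2F_{k-1}$, so $P'$ is shorter than the period $F_{k-1}$ and the standard ``$|P| \ge 2p$'' period-based cover correspondence does not apply in its usual form. Here I would exploit that $P'$ is itself a prefix of $\Fib_{k-1}$ whose cover structure is known inductively, and argue case-by-case on the length of a candidate cover $C$ (necessarily a prefix of $\Fib$) that its occurrences in $P$ must align with the period $F_{k-1}$ in a way that forces $C$ to cover $P'$ as well.
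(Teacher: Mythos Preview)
Your outline follows the same skeleton as the paper's proof: cite \cite{DBLP:journals/arscom/ChristouCI16} for $\ell=F_k$, use the Fibonacci overlap identity $\LCP(\Fib_{m+1},\Fib_{m-1}\Fib_m)=F_{m+1}-2$ to see that $P':=\Fib[0\dd\ell-F_{k-1})$ is a border of $P:=\Fib[0\dd\ell)$ whenever $\ell\le F_{k+1}-2$, and handle $\ell=F_k-1$, $\ell=2F_k-1$ by a separate inductive argument. The paper, however, is shorter in three places where you over-engineer.

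First, in the recursive step the paper does not build a two-way correspondence between covers of $P$ and of $P'$; it simply shows that $P'$ is itself a \emph{cover} of $P$. For $\ell\ge 2F_{k-1}$ the two occurrences of $P'$ at positions $0$ and $F_{k-1}$ suffice, and for $F_k<\ell\le 2F_{k-1}-2$ the overlap identity (applied with $m=k-1$) gives a \emph{third} occurrence of $P'$ at position $F_{k-2}$, and the three occurrences at $0,F_{k-2},F_{k-1}$ cover $P$. Once $P'$ covers $P$, the standard fact that the shortest cover of a string is also the shortest cover of each of its covers yields $\Cov_\Fib[\ell]=\Cov_\Fib[\ell-F_{k-1}]$ immediately. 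That third occurrence is exactly what dissolves your ``main obstacle''; no case analysis on the length of a candidate cover $C$ is needed, and your ``shift back'' argument (which as written is delicate when an occurrence of $C$ in $P$ starts before $F_{k-1}$ yet ends past $|P'|$) becomes unnecessary.

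Second, for $\ell=F_k-1$ and $\ell=2F_k-1$ the paper does not enumerate all borders. A short periodicity-lemma argument shows that the \emph{longest} proper border of $\Fib[0\dd F_k-1)$ is $\Fib[0\dd F_{k-2}-1)$, so by the inductive hypothesis the only possible proper-cover length is $F_{k-2}-1$; one then exhibits the single uncovered position $F_{k-2}-1$. The case $2F_k-1$ is identical, with longest border $\Fib[0\dd F_k-1)$ and uncovered position $F_k-1$. The Three Squares Lemma is never invoked.
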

\begin{proof}
It is well-known that for any $m \ge 1$, $\LCP(\Fib_{m+1},\Fib_{m-1}\Fib_m)=F_{m+1}-2$; see e.g.\ \cite{DBLP:journals/siamcomp/KnuthMP77}.
In particular, $F_{m-1}$ is a period of $\Fib_{m+1}[0 \dd F_{m+1}-2) = (\Fib_{m-1}\Fib_m)[0 \dd F_{m+1}-2) = (\Fib_{m-1}\Fib_{m-1}\Fib_{m-2})[0 \dd F_{m+1}-2)$, but not a period of $\Fib_{m+1}[0 \dd F_{m+1}-1)$.
Equivalently, $\Fib[0\dd \ell-F_{m-1})$ is a border of $\Fib[0\dd \ell)$ if and only if $F_{m-1}<\ell\le F_{m+1}-2$.

The value of $\Cov_{\Fib}[F_k]$ as well as of $\Cov_{\Fib}[\ell]$ for $\ell\le 2$ follows from \cite{DBLP:journals/arscom/ChristouCI16}. From the same paper we know that $\Fib_{k-2}$ is the longest proper border of $\Fib_k$. Moreover, $\Fib[0\dd F_{k-2}-1)$ is the longest border of $\Fib[0\dd F_k-1)$. Indeed,
an existence of a longer border (of length different than $F_{k-1}-1$) would result in $\Fib[0\dd F_k-2)$ having period $1$ by the periodicity lemma (as it already has periods $F_{k-1}$ and $F_{k-2}$; see \cite{DBLP:journals/siamcomp/KnuthMP77}).

We will prove by induction that $\Cov_{\Fib}[F_k-1]=F_k-1$ for $k \ge 4$. The base case holds. By the above, the only candidate for the length of a proper cover of $\Cov_{\Fib}[F_k-1]$ is $\Cov_{\Fib}[F_{k-2}-1]$, which equals $F_{k-2}-1$ by induction.
Prefix $\Fib[0\dd F_{k-2}-1)$ has an occurrence at positions $0$ and $F_{k-2}$ in this $\Fib$,
but the position $F_{k-2}-1$ remains uncovered; existence of yet another occurrence that contains this position would result in a long overlap of occurrences which, in turn, would result in the string $\Fib[0\dd F_{k-2}-1)$ being periodic, which is not the case. Hence, $\Cov_{\Fib}[F_k-1]=F_k-1$.

Next we prove by induction that $\Cov_{\Fib}[2F_k-1]=2F_k-1$ for $k \ge 4$. Similarly, $\Fib[0\dd F_k-1)$ is the longest border of $\Fib[0\dd 2F_k-1)$. By the inductive hypothesis, $\Cov_{\Fib}[F_k-1]=F_k-1$ is the only candidate for the length of a proper cover of $\Fib[0\dd 2F_k-1)$.
By exactly the same argument as in the previous case, the position $F_k-1$ in $\Fib$ is not covered by this candidate. Thus $\Cov_{\Fib}[2F_k-1]=2F_k-1$.

We have $\Cov_{\Fib}[6]=3$. Now, for $k \ge 5$, let $\ell\in[F_k+1\dd 2F_{k-1}-2]\cup [2F_{k-1}\dd F_{k+1}-2]$.
As noted, $\Fib[0\dd \ell-F_{k-1})$ is a border of $\Fib[0\dd \ell)$ (since $\ell\le F_{k+1}-2$). Additionally, if $\ell\le 2F_{k-1}-2$, string $\Fib[0\dd \ell-F_{k-1})$ also appears in $\Fib$ at position $F_{k-2}$ (by the $\LCP$ equality from the beginning of the proof). Those two or three occurrences cover all the positions of $\Fib[0\dd \ell)$, hence a cover of $\Fib[0\dd \ell-F_{k-1})$ is also a cover of $\Fib[0\dd \ell)$. At the same time a shortest cover of $\Fib[0\dd \ell)$ has to be a cover of a border that is a cover, hence $\Cov_{\Fib}[\ell]=\Cov_{\Fib}[\ell-F_{k-1}]$.
\qed\end{proof}

\begin{corollary}\label{cor}
For any $m \ge 1$, the array $\Cov_{\Fib_m}$ contains $\Theta(m)$ different values. Only $\Theta(m)$ prefixes of $\Fib_m$ are superprimitive. Moreover, for all $\ell \in [5 \dd F_m)$, we have $\Cov_{\Fib_m}[\ell]+1 \ne \Cov_{\Fib_m}[\ell+1]$.
\end{corollary}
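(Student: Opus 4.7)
The plan is to exploit Theorem~\ref{thm:fib} to characterize the image of $\Cov_{\Fib}$ on prefixes of length $\ge 5$, and then derive all three claims from that characterization. By induction on $\ell$, every value of $\Cov_{\Fib}$ belongs to
\[
\mathcal{I} := \{1, 2, 3, 5\} \cup \{F_k - 1 : k \ge 4\} \cup \{2F_k - 1 : k \ge 4\};
\]
the corner cases listed in Theorem~\ref{thm:fib} yield values in $\mathcal{I}$ directly, and the non-corner rule $\Cov_{\Fib}[\ell] = \Cov_{\Fib}[\ell - F_{k-1}]$ strictly decreases the argument, so the inductive hypothesis applies. A short follow-up shows that for $\ell \ge 5$ the values $1, 2, 4$ never occur: a non-corner $\ell$ satisfying $F_k < \ell < F_{k+1}$ reduces to $\ell - F_{k-1} \in (F_{k-2}, F_k)$, and a finite check shows that the only $\ell$ whose recursion could terminate at the corner values $1, 2, 4$ are themselves corner indices strictly smaller than $5$. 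Thus, for $\ell \ge 5$,
\[
\Cov_{\Fib}[\ell] \in \mathcal{J} := \{3, 5\} \cup \{F_k - 1 : k \ge 5\} \cup \{2F_k - 1 : k \ge 4\}.
\]

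Parts~1 and~2 then follow quickly. For part~1, the values $F_k - 1$ for $5 \le k \le m$ are pairwise distinct and each attained at position $F_k - 1$, giving $\Omega(m)$ different values, while $|\mathcal{I} \cap [1, F_m]| = O(m)$ yields the matching upper bound. For part~2, a prefix is superprimitive iff $\Cov_{\Fib_m}[\ell] = \ell$, which by Theorem~\ref{thm:fib} forces $\ell$ to be one of the corner indices; using $2F_k = F_{k+1} + F_{k-2}$ to bound those in $[1 \dd F_m]$, there are $\Theta(m)$ superprimitive prefixes.

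Part~3 is the main target. It suffices to prove the combinatorial claim that no two consecutive integers lie in $\mathcal{J}$, because then $\Cov_{\Fib_m}[\ell] + 1 \notin \mathcal{J}$ while $\Cov_{\Fib_m}[\ell+1] \in \mathcal{J}$ for every $\ell \in [5 \dd F_m)$. Listing $\mathcal{J}$ in increasing order gives $3, 5, 7, 9, 12, 15, 20, 25, 33, 41, \dots$, with successive differences $2, 2, 2, 3, 3, 5, 5, 8, 8, \dots$, all at least $2$. The main obstacle—and where most of the work goes—is this gap verification: one must show that neither $F_k$ nor $2F_k$ is of the form $F_{k'} - 1$ or $2F_{k'} - 1$ for any admissible $k, k'$, a small case analysis using strict Fibonacci monotonicity and $2F_k = F_{k+1} + F_{k-2}$ to rule out every candidate collision.
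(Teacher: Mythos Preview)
Your proposal is correct, and the route you take for Part~3 is genuinely different from the paper's. You argue \emph{value-side}: you first pin down the image set $\mathcal{J}$ of $\Cov_{\Fib}$ on positions $\ell\ge 5$ (excluding $1,2,4$ by tracing the recursion backward), and then show $\mathcal{J}$ contains no two consecutive integers via the Fibonacci gap computation $2F_{k-1}-1-(F_k-1)=F_{k-3}\ge 2$ and $F_{k+1}-1-(2F_{k-1}-1)=F_{k-2}\ge 3$. The paper instead argues \emph{position-side}: it takes a minimal counterexample $\ell$ with $\Cov_\Fib[\ell+1]=\Cov_\Fib[\ell]+1$, observes that if neither $\ell$ nor $\ell+1$ is a corner index then the recursion $\Cov_\Fib[\,\cdot\,]=\Cov_\Fib[\,\cdot\,-F_{k-1}]$ produces a smaller counterexample, and then rules out the remaining corner cases ($\ell+1=F_k$ gives $\Cov_\Fib[\ell]=\ell$ versus $\Cov_\Fib[\ell+1]\in\{3,5\}$; $\ell=F_k$ with value $3$ forces $\Cov_\Fib[\ell+1]=9$). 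Your approach is cleaner and more self-contained once $\mathcal{J}$ is established, while the paper's avoids having to explicitly exclude $4$ from the image but is terser and leans more heavily on tracking which corner case is in play. Your backward-tracing step for excluding $4$ (the positions reducing to $4$ in one step are $7=F_5-1$ and $9=2F_4-1$, both corner indices) deserves to be written out explicitly rather than left as ``a finite check,'' since it is the crux of your version.
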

\begin{proof}
The first two statements follow readily from \cref{thm:fib}. As for the third statement, among the distinct values in $\Cov_{\Fib}$ from position 5 onwards, the only pairs of consecutive numbers are $(3,4)$ and $(4,5)$. (This is because for large enough $k$, values $F_{k+1}-1$ and $2F_k-1$ differ by more than 1.) Therefore, if $\ell \ge 5$ would be the smallest position such that $\Cov_{\Fib}[\ell+1]=\Cov_{\Fib}[\ell]+1$, then $\ell$ or $\ell+1$ would be equal to $F_k$ for some $k \ge 5$. By the recursion in \cref{thm:fib}, if $\ell+1=F_k$, then $\Cov_\Fib[\ell]=\ell>5$ and $\Cov_\Fib[\ell+1] \in \{3,5\}$, so two consecutive values are not possible. If $\ell=F_k$ and $\Cov_\Fib[\ell]=3$, then $\Cov_\Fib[\ell+1]=9$ by easy induction, so again two consecutive values on consecutive positions are not possible.
\qed\end{proof}

The recursive characterization of \cref{thm:fib} allows to compute any element of the cover array of $\Fib_m$ in $\Oh(\log n)$ time, where $n=F_m$, without additional space. 
By \cref{cor}, the cover array of $\Fib_m$ has only $\Oh(\log n)$ different values, which allows one to store the cover array of $\Fib_m$ in a packed form in $\Oh(n \log \log n / \log n)$ space so that its elements can be retrieved in $\Oh(1)$ time. In the next subsection we show that an equally space-efficient representation exists for every string over a constant-sized alphabet.

\newcommand{\pref}{\mathit{pref}}
\renewcommand{\sp}{\mathit{sp}}
\subsection{Proof of \cref{thm:online}}
We use the following known corollary of the periodicity lemma.

\begin{lemma}[\cite{DBLP:journals/algorithmica/BreslauerG95,DBLP:conf/icalp/PlandowskiR98}]\label{lem:3occs}
If $|X| < |Y| < 2|X|$ are two strings and $X$ has at least three occurrences in $Y$ as a substring, then $X$ is periodic.
\end{lemma}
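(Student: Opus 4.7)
The plan is to exploit the length constraint $|Y|<2|X|$ to force the three occurrences to overlap pairwise, read off two periods of $X$ from those overlaps, and close with the Periodicity Lemma.

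First I would name the three occurrence positions $p_1<p_2<p_3$ in $Y$ and set $d_1=p_2-p_1$, $d_2=p_3-p_2$. The containment $p_3+|X|\le|Y|<2|X|$ yields $p_3<|X|$, so $d_1+d_2=p_3-p_1\le p_3<|X|$; in particular both $d_1$ and $d_2$ are strictly smaller than $|X|$, which means the occurrence at $p_2$ overlaps the occurrence at $p_1$, and likewise for $p_3$ and $p_2$.

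Next I would convert each overlap into a period. If $X$ occurs at positions $p$ and $p+d$ in $Y$ with $0<d<|X|$, then for every $i\in[0\dd|X|-d)$ the two expressions $X[i+d]=Y[p+i+d]=X[i]$ coincide, so $d$ is a period of $X$. Applying this to both consecutive pairs, both $d_1$ and $d_2$ are periods of $X$.

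Now I would invoke \cref{perlemma}. Since $d_1+d_2<|X|$, the lemma gives that $q:=\gcd(d_1,d_2)$ is also a period of $X$. Because $q\le\min(d_1,d_2)\le(d_1+d_2)/2<|X|/2$, we get $2q<|X|$, so the shortest period of $X$ is at most $|X|/2$; by the paper's definition this is exactly the statement that $X$ is periodic.

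The only subtle step is ensuring the Periodicity Lemma hypothesis $d_1+d_2\le|X|$ holds, which is the main obstacle but follows immediately from the strict inequality $|Y|<2|X|$ that forces $p_3<|X|$; the rest is bookkeeping on overlaps.
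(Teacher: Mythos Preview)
Your argument is correct. The paper does not supply its own proof of this lemma; it is quoted as a known fact from \cite{DBLP:journals/algorithmica/BreslauerG95,DBLP:conf/icalp/PlandowskiR98}, so there is nothing to compare against beyond confirming that your derivation is sound, which it is.

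One minor remark: the appeal to \cref{perlemma} is not actually needed. Once you have established that both $d_1$ and $d_2$ are periods of $X$ and that $d_1+d_2<|X|$, you already know $\min(d_1,d_2)\le(d_1+d_2)/2<|X|/2$, so $X$ has a period at most $|X|/2$ and is periodic by definition. Passing through $\gcd(d_1,d_2)$ via Fine and Wilf yields a smaller period but is superfluous for the conclusion.
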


Before we describe the data structure, let us give some intuition.

Assume that $\Cov_T[\ell]=c$ with $c < \ell$. That is, string $C=T[0 \dd c)$ is a proper shortest cover of a prefix $T[0 \dd \ell)$. If the second occurrence of $C$ in $T$ is at position $j>0$, then $U^2=T[0 \dd 2j)$ is a square. Further, $j>c/2$, as otherwise $C$ would be periodic. Hence, $C$ is a prefix of $T[0 \dd 2j)$. This concludes that the square $T[0 \dd 2j)$ is primitively rooted, as otherwise $C$ would be periodic. By \cref{lem:3sq}, there are only $\Oh(\log n)$ primitively rooted square prefixes of $T$. Thus, if $C$ is a proper shortest cover of a prefix of $T$, we can assign to $C$ one of $\Oh(\log n)$ primitively rooted square prefixes of $T$.

Let $P=T[0 \dd p)$ be the shortest aperiodic prefix of $T$ such that $p \ge j$. As $T[0 \dd c)=C$ is aperiodic, $p$ is well-defined and $C$ has a prefix $P$. Thus, if $C$ is a proper shortest cover of a prefix of $T$, this allows to assign to $C$ one of $\Oh(\log n)$ aperiodic prefixes of $T$.

For $k=\ell-c$, we have $T[k \dd k+p)=P$. We observe that there can be no further occurrence of $P$ in $T$ at a position in $(k\dd \ell-p]$ (that is, no further occurrence of $P$ in $T[0 \dd c)$). Indeed, such an occurrence would be a substring of $C$, so it would imply an occurrence of $P$ in $T$ at a position in $[1 \dd j)$. By \cref{lem:3occs}, this would contradict the fact that $P$ is aperiodic.
In summary, if $C$ is a proper shortest cover of a prefix $T[0 \dd \ell)$, then $C$ can be uniquely identified by the rightmost occurrence in $T[0 \dd \ell)$ of the aperiodic prefix $P$ of $T$ that is assigned to $C$. Moreover, the occurrence is at one of the positions in $(\ell-2p\dd \ell)$, as $2p \ge 2j > c$.

\textbf{Data structure:} Let $j_1,\ldots,j_t$ be the half lengths of all primitively rooted square prefixes of $T$. By \cref{lem:3sq}, we have $t = \Oh(\log n)$. The data structure stores $t$ lengths of aperiodic prefixes of $T$, $p_1,\ldots,p_t$. For every $i \in [1 \dd t]$, $p_i$ is the length of the shortest aperiodic prefix of $T[0 \dd 2j_i)$ of length at least $j_i$. (It is known that such a prefix exists, as $T[0 \dd 2j_i-1)$ is aperiodic by \cref{perlemma}.)

For each $\ell \in [1 \dd n]$, we store a bit $\sp[\ell]$ that equals 1 if and only if $T[0 \dd \ell)$ is superprimitive. If $\sp[\ell]=0$, a number $\pref[\ell] \in [1 \dd t]$ is stored that determines the aperiodic prefix $P=T[0 \dd p_{\pref[\ell]})$ of $T$ that corresponds to the shortest cover $C$ of $T[0 \dd \ell)$, as discussed above. Precisely, if $j_i$ is the position of the second occurrence of $C$ in $T$, then $\pref[\ell]=i$. Finally, a data structure for $\IPM$ queries in $T$ is stored.

Overall, provided that the arrays $\sp$ and $\pref$ are stored in a packed form, the space complexity is $\Oh(\log n + n \log \log n/\log n+n/\log n + n/\log_\sigma n) = \Oh(n \log \log n/\log n+n/\log_\sigma n)$, as required.

\textbf{Queries:} To compute $\Cov_T[\ell]$, we first check if $\sp[\ell]=1$ and, if that is the case, return $\ell$. Otherwise, we ask an $\IPM$ query to compute the righmost occurrence of $P=T[0 \dd p_{\pref[\ell]})$ in $T[\ell-2\cdot p_{\pref[\ell]}+1 \dd \ell)$. As $P$ is aperiodic, there are at most two such occurrences. We select as $k$ the starting position of the rightmost occurrence. The shortest cover of $T[0 \dd \ell)$ is $T[k \dd \ell)$ (i.e., $\Cov_T[\ell]=\ell-k$).

By \cite{DBLP:journals/corr/KociumakaRRW13}, the query time complexity is $\Oh(1)$. This concludes the proof of \cref{thm:online}.

\section{Lower Bound on the Complexity in the \pillar Model}\label{sec:pillar}

\subsection{The \pillar model}
Let us start first formally introduce the primitives of the \pillar model~\cite{FOCS20}. The argument strings are fragments of strings in a given collection $\mathcal{X}$:
\begin{itemize}
\item $\mathsf{Extract}(S, \ell, r)$: Retrieve string $S[\ell\dd r)$.
\item $\LCP(X, Y),\, \LCPR(X, Y)$: Compute the length of the longest common prefix/suffix of $X$ and $Y$.
\item $\IPM(X, Y)$: Assuming that $|Y| \le 2|X|$, compute the starting positions of all exact occurrences of $X$ in $Y$, expressed as an arithmetic progression.
\item $\mathsf{Access}(S, i)$: Retrieve the letter $S[i]$;
\item $\mathsf{Length}(S)$: Compute the length $|S|$ of the string $S$.
\end{itemize}
The runtime of algorithms in this model can be expressed in terms of the number of primitive \pillar operations (and additional operations not performed on the strings themselves).

\subsection{Lower Bound}
We focus on checking if a string over an alphabet $\{\mathtt{a},\mathtt{b}\}$ is covered by its border $\mathtt{aba}$. Strings covered by $\mathtt{aba}$ are formed of concatenations of strings of a form $(\mathtt{ab})^k\mathtt{a}$ for $k\ge 1$; equivalently, strings that have $\mathtt{aba}$ as a border and do not contain a substring $\mathtt{bb}$ or $\mathtt{aaa}$. 

For infinitely many positive integers $n$, we show a strategy for an adversary to answer $Cn/\log n$ \pillar queries on a length-$n$ binary string, for a certain constant $C>0$, after which the adversary still has the choice of fixing the string in two ways: in one $T$ has a cover $\mathtt{aba}$, and in the other $T$ is superprimitive (i.e., it has no proper cover).

We define a morphism $\phi:\{\mathtt{0},\mathtt{1}\} \mapsto \{\mathtt{a},\mathtt{b}\}$:
\begin{itemize}
\item $\phi(\mathtt{0})= \mathtt{abababa}$ $\mathtt{aba}$ $\mathtt{ababa}=(\mathtt{ab})^3\mathtt{a}(\mathtt{ab})\mathtt{a}(\mathtt{ab})^2\mathtt{a}$
\item $\phi(\mathtt{1})=\mathtt{abababa}$ $\mathtt{ababa}$ $\mathtt{aba}=(\mathtt{ab})^3\mathtt{a}(\mathtt{ab})^2\mathtt{a}(\mathtt{ab})\mathtt{a}$
\end{itemize}
Both $\phi(\mathtt{0})$ and $\phi(\mathtt{1})$ have length $15$ and have a cover $\mathtt{aba}$. Thus $\phi(S)$, for any string $S$ over alphabet $\{\mathtt{0},\mathtt{1}\}$, has a cover $\mathtt{aba}$.

Let us recall that a de Bruijn sequence of order $k$ over an alphabet $\Sigma$ is a string of length $|\Sigma|^k+k-1$ over the alphabet $\Sigma$ such that its every substring of length $k$ is distinct. It is well-known that such sequences exist for every finite alphabet $\Sigma$ and integer $k \ge 1$~\cite{deBruijn}.

Let $B_k$ be a de Bruijn sequence of order $k$ over the binary alphabet $\{\mathtt{0},\mathtt{1}\}$. We apply the morphism $\phi$ on $B$ to obtain a string $T_k$ over alphabet $\{\mathtt{a},\mathtt{b}\}$ of length $15\cdot(2^k+k-1)$. Due to the property of de Bruijn sequences, each substring of $T$ of length at least $15(k+1)-1$ is distinct. Indeed, every ``aligned'' substring of length $15k$ starting at a position divisible by 15 in $T_k$ is distinct, and every substring of $T_k$ of length $15(k+1)-1$ contains an ``aligned'' substring of length $15k$.

Due to this property, an answer to an $\LCP$ or $\LCPR$ query on $T_k$ for two different positions is always bounded from above by $15(k+1)-2$. Similarly for the $\IPM$ queries; if we query for a substring of length at least $15(k+1)-1$, then we do not gain any interesting information (the only occurrence of the substring is the one used to ask the query). On the other hand, by asking an $\IPM$ query for a shorter substring we only gain information about a part of $T_k$ of length at most $30(k+1)-2$.

Formally, the strategy of the adversary for a text $T$ of length $n=|T_k|$ is as follows. Queries $\mathsf{Extract}$, $\mathsf{Access}$, $\mathsf{Length}$, $\LCP$, $\LCPR$ are answered as in $T_k$. An $\IPM(X,Y)$ query for $|X| < 15(k+1)-1$ is also answered as in $T_k$. Finally, to answer an $\IPM(X,Y)$ query for $|X| \ge 15(k+1)-1$, we refer to the fragments $T[i_x \dd j_x]=X$ and $T[i_y \dd j_y]=Y$ and return an occurrence of $X$ in $Y$ at position $i_y-i_x$ if $[i_x \dd j_x] \subseteq [i_y \dd j_y]$ and no occurrence otherwise.

We say that a position $i \in [0 \dd n)$ of $T$ \emph{has been touched} if the algorithm has performed (1) an $\mathsf{Access}$ query on $T[i]$, or (2) an $\LCP(T[i_x \dd j_x],T[i_y \dd j_y])$ query such that $i \in [i_x \dd i_x+\ell) \cup [i_y \dd i_y+\ell)$ where $\ell$ is the result of the $\LCP$ query, or (3) similarly an $\LCPR$ query such that $i$ belongs to the computed $\LCPR$ of one of the two queried substrings of $T$, or (4) an $\IPM(T[i_x \dd j_x],T[i_y \dd j_y])$ query for $j_x-i_x+1 < 15(k+1)-1$ such that $i \in [i_x \dd j_x] \cup [i_y \dd j_y]$. In total, after $q$ \pillar operations, fewer than $45q(k+1) \le 90kq$ positions of $T$ have been touched. Thus after $q = \lfloor 2^k/(6k)\rfloor$ operations, there still exists a position in $T$ that has not been touched. Assume $i$ is such a position. Then the adversary can make the choice to set $T[i]$ as $T_k[i]$ or as the letter different from $T_k[i]$; all the remaining untouched positions are set as in $T_k$. If $T[i]=T_k[i]$, $T=T_k$ has a cover $\mathtt{aba}$. If $T[i] \ne T_k[i]$, $T$ contains exactly one substring $\mathtt{a}^s$ for some $s \in [3 \dd 5]$, or exactly one substring $\mathtt{b}^t$, for some $t \in [2 \dd 3]$. It is easy to see that in this case $T$ is superprimitive.

There exists a constant $C>0$ (for example, $C=1/180$) such that the selected value of $q$ satisfies $q \ge Cn/\log n$. \cref{thm:main2} is proved.

\section{Open Problems}\label{sec:op}
It remains open if the data structure of \cref{thm:online} can be constructed in sublinear time or if its space complexity can be decreased to $\Oh(n/\log n)$ for $\sigma = \Oh(1)$.

Future work also includes designing sublinear-time algorithms for other notions of quasiperiodicity for which $\Oh(n)$-time algorithms are already known, for a length-$n$ string over an integer alphabet; this includes, for example, seeds~\cite{DBLP:journals/talg/KociumakaKRRW20,DBLP:conf/esa/Radoszewski23}, enhanced covers~\cite{DBLP:journals/tcs/FlouriIKPPST13}, and partial covers~\cite{DBLP:conf/esa/Radoszewski23}.

\bibliographystyle{splncs04}
\bibliography{references}

\end{document}